\documentclass[10pt,twoside]{IEEEtran} 
\usepackage{setspace}

\ifCLASSOPTIONcompsoc
\usepackage[nocompress,noadjust]{cite}
\else
\usepackage[noadjust]{cite}
\fi
%


\usepackage{graphicx} 
\usepackage{mathptmx} 
\usepackage{times} 
\usepackage{amsmath} 
\usepackage{amssymb}  
\usepackage{amsthm}

\usepackage{tikz}
\usepackage{subfigure}

\theoremstyle{definition}
\newtheorem{definition}{Definition}[section]

\theoremstyle{plain}
\newtheorem{assumption}[definition]{Assumption}
\newtheorem{lemma}[definition]{Lemma}
\newtheorem{proposition}[definition]{Proposition}
\newtheorem{theorem}[definition]{Theorem}
\newtheorem{corollary}[definition]{Corollary}

\theoremstyle{remark}
\newtheorem{remark}[definition]{Remark}

\newcommand{\abs}[1]{\lvert #1 \rvert}

\newcommand\ssp[1]{\mathchoice{#1^\prime}{#1^\prime}{#1^\prime}%
{#1^{\scalebox{.7}{$\scriptscriptstyle\prime$}}}
}

\DeclareMathOperator{\col}{col}

\DeclareSymbolFont{bbold}{U}{bbold}{m}{n}
\DeclareSymbolFontAlphabet{\mathbbold}{bbold}
\newcommand{\onev}{\mathbbold1}

\newcommand{\myindent}{\hspace{.2cm}}

\usepackage{calrsfs}
\DeclareMathAlphabet{\pazocal}{OMS}{zplm}{m}{n}
\renewcommand{\mathcal}[1]{\pazocal{#1}}

\usepackage{mathtools}

\allowdisplaybreaks

\hyphenation{}

\begin{document}
%
\title{Stability of SIS Spreading Processes in Networks with Non-Markovian Transmission and Recovery}

\author{Masaki~Ogura,~\IEEEmembership{Member,~IEEE,}
and~Victor~M.~Preciado,~\IEEEmembership{Member,~IEEE}
\IEEEcompsocitemizethanks{\IEEEcompsocthanksitem M. Ogura is with the Division 
of Information Science, Nara Institute of Science and Technology, Ikoma, Nara 630-0192, Japan.
E-mail: \mbox{oguram@is.naist.jp}
\IEEEcompsocthanksitem V. M. Preciado is with the Department of Electrical and Systems Engineering, University of Pennsylvania, Philadelphia, PA 19104, USA.
\mbox{E-mail: preciado@seas.upenn.edu}
}
}

%

\IEEEtitleabstractindextext{%
\begin{abstract}
Although viral spreading processes taking place in networks are {often} analyzed using Markovian models in which both the \emph{transmission} and the \emph{recovery} times follow exponential distributions, empirical studies show that, in {many} real scenarios, the distribution of these times are {not necessarily} exponential. To overcome this limitation, we first introduce a generalized {susceptible-infected-susceptible (SIS)} spreading model that allows transmission and recovery times to follow \emph{phase-type} distributions. In this context, we derive a lower bound on the exponential decay rate towards the infection-free equilibrium of the spreading model without relying on mean-field approximations. Based on our results, we illustrate how the particular shape of the transmission/recovery distribution influences the exponential rate of convergence towards the equilibrium. 
\end{abstract}
}

\maketitle

\IEEEdisplaynontitleabstractindextext

%
\IEEEpeerreviewmaketitle

\section{Introduction}

\IEEEPARstart{U}{nderstanding} the dynamics of spreading processes in complex networks is a challenging problem with a wide range of practical applications in epidemiology and public health~\cite{Bailey1975}, information propagation in social networks~\cite{Lerman2010}, or cyber-security~\cite{Roy2012}. During the last decade, significant progress has been made towards understanding the relationship between the topology of a network and the dynamics of spreading processes taking place over the network (see~\cite{Pastor-Satorras2015a, Nowzari2015a} for recent surveys). A common approach to investigate this relationship is by modeling spreading processes using networked Markov processes, such as the networked susceptible-infected-susceptible (SIS) model~\cite{VanMieghem2009a}. Based on these Markovian models, it is then possible to find an explicit relationship between epidemic thresholds and network eigenvalues in static topologies {\cite{Wan2008IET,VanMieghem2009a,Khanafer2016,Pare2018a}}, as well as in multilayer {\cite{DarabiSahneh2013,Santos2015}}, time-varying {\cite{Ogura2015c,Pare2018}}, and adaptive {\cite{Ogura2015i,Mai2018}} networks. Markovian models also allow us to design optimal strategies for containment of spreading processes taking place in static~\cite{Preciado2014}, uncertain~\cite{Han2015a}, and temporal~\cite{Ogura2016l} networks.

A consequence of using Markovian models in the analysis of spreading processes is that both \emph{transmission times} (i.e., the time it takes for an infection to be transmitted from an infected node to one of its neighbors) and \emph{recovery times} (i.e., the time it takes for an infected node to recover) follow exponential distributions. However, empirical studies show that, in {many} real networks, the distribution of transmission and recovery times are {not necessarily} exponential. For example, the transmission time of messages in Twitter, or news in other social media outlets, follows (approximately) a log-normal distribution~\cite{Lerman2010,VanMieghem2013}. In the context of human contact networks, the transmission of various infectious diseases~\cite{Blythe1988,Chowell2014,Keeling2002,NISHIURA2007}, or the time it takes to recover from the influenza virus~\cite{Suess2010} are {often} non-exponential.

Since realistic transmission and recovery times {often} follow non-exponential distributions, it is of practical importance to understand the role of these distributions on the dynamics of the spread. In this direction, the authors in~\cite{VanMieghem2013} illustrated, via numerical simulations, that non-exponential transmission times can have a substantial effect on the dynamics of the spread. Motivated by this study, several approximative methods for quantifying the steady-state fraction of infected nodes have been proposed in the literature. In this direction, the authors in~\cite{Cator2013a} analyzed spreading processes with general transmission and recovery times using mean-field approximations. In~\cite{Min2013,Jo2014}, {simple but yet} analytically solvable spreading models with non-exponential transmission times were studied. Moment\nobreakdash-closure approximations for analyzing spreading processes with non-exponential transmission and recovery times were proposed in~\cite{Kiss2015,Pellis2015}. Under the assumption that recovery times follow an exponential distribution, the analytical framework in~\cite{Starnini2017} enables us to reduce non-exponentially distributed transmission times into exponentially distributed counterparts without changing the steady-state of the spread. {The authors in \cite{Nowzari2015} used a mean-field approximation to derive stability conditions for the infection-free equilibrium of a spreading process with three compartments and non-Markovian transition dynamics. However, their conditions are either conservative or guarantees only the local stability of the infection-free equilibrium.}

In this paper, we propose a tractable but rigorous approach to analyze the \emph{transient} of {SIS} spreading processes over arbitrary networks with general (non-exponential) transmission and recovery times. In this direction, we first introduce the \emph{generalized networked SIS} (GeNeSIS) model, which allows for transmission and recovery times following arbitrary phase\nobreakdash-type distributions (see, e.g.,~\cite{Asmussen1996}). {Defined as the \emph{exit time} of time-homogeneous Markov processes,} phase-type distributions form a dense family in the space of positive-valued distributions~\cite{Cox1955}. Therefore, the GeNeSIS model allows to theoretically analyze arbitrary transmission and recovery times within an arbitrary accuracy~\cite{Asmussen1996}. We are particularly interested in quantifying the exponential decay rate of the spread towards the infection-free equilibrium; in other words, to eradicate the viral spreading process. The key tool used in our derivations is a vectorial representation of phase-type distributions, which we use to bound the exponential decay rate towards the  infection-free equilibrium in the stochastic dynamics of the GeNeSIS model. 

This paper is organized as follows. In Section~\ref{sec:math}, we introduce elements of graph theory and stochastic differential equations with Poisson jumps. In Section~\ref{sec:SISmodel}, we describe a generalized {SIS} model over networks with arbitrary transmission and recovery times. In Section~\ref{sec:vectorRep}, we provide a vectorial representation of the GeNeSIS model, which we use in Section~\ref{sec:analysis} to analyze the exponential decay rate towards the infection-free equilibrium. We validate the effectiveness of our results via numerical simulations in Section~\ref{sec:sim}, where we also illustrate the effect of non-exponential transmission/recovery times in the dynamics of the spread.

\section{Mathematical Preliminaries} \label{sec:math}

Let $\mathbb{R}$ and~$\mathbb{N}$ denote the set of real numbers and positive integers, respectively. For a positive integer $n$, define $[n] = \{1, \dotsc, n\}$. For a real function $f$, let $f(t^-)$ denote the limit of $f$ from the left at time~$t$. We let $I_n$ and~$O_n$ denote the $n\times n$ identity and zero matrices. By $\onev_p$ and~$0_p$, we denote the $p$-dimensional vectors whose entries are all ones and zeros, respectively. A real matrix~$A$ (or a vector as its special case) is said to be nonnegative, denoted by $A\geq 0$, if $A$ is nonnegative entry-wise. The notation $A\leq 0$ is understood in the obvious manner. For a square matrix $A$, the maximum real part of its eigenvalues, called the \emph{spectral abscissa} of $A$, is denoted by~$\eta(A)$. We say that $A$ is Metzler if the off-diagonal entries of $A$ are all non-negative. It is easy to see that, if $A$ is Metzler, then $e^{At}\geq 0$ for every $t \geq 0$. The Kronecker product~\cite{Horn1990} of two matrices~$A$ and~$B$ is denoted by $A\otimes B$, and the Kronecker sum of two square matrices~$A\in \mathbb{R}^{p\times p}$ and~$B\in\mathbb{R}^{q\times q}$ is defined by
\begin{equation*}
A\oplus B = A\otimes I_q + I_p\otimes B.
\end{equation*}
Given a collection of $n$ matrices~$A_1,\ldots, A_n$ having the same number of columns, the matrix obtained by stacking the matrices in vertical ($A_1$ on top) is denoted by~$\col(A_1, \dotsc, A_n)$.

An undirected graph is a pair~$\mathcal G=(\mathcal V, \mathcal E)$, where
$\mathcal V = \{1, \dotsc, n\}$ is the set of nodes, and~$\mathcal E \subset
\mathcal V \times \mathcal V$ is the set of edges, consisting of distinct and
unordered pairs~$\{i, j\}$ for $i, j\in \mathcal V$. We say that a node~$i$ is a
neighbor of~$j$ (or that $i$ and~$j$ are adjacent) if $\{i, j\} \in \mathcal E$.
The set of neighbors of node~$i$ is denoted by~$\mathcal N_i$. The adjacency
matrix of $\mathcal G$ is defined as the $n\times n$
matrix whose $(i,j)$-th entry is $1$ if and only if nodes
$i$ and~$j$ are adjacent, $0$ otherwise.

We let $P(\cdot)$ denote the probability of events. The expectation of a random variable is denoted by~$E[\cdot]$. A Poisson counter~\cite[Chapter~4]{Cinlar1975a} of rate $\lambda>0$ is denoted by $N_\lambda$. In this paper, we extensively use a specific class of stochastic differential equations with Poisson jumps, described below. For each $i\in [m]$, let $f_i\colon \mathbb{R}\times \mathbb{R} \to \mathbb{R}$ be a continuous function, $N_{\lambda_i}$ be a Poisson counter, and~$\kappa_i$ be a continuous-time, real, and stationary stochastic process defined over the probability space~$\Omega$. All the above stochastic processes are assumed to be independent of each other. Then, we say that a real and right\nobreakdash-continuous function~$x$ is a solution of the stochastic differential equation
\begin{equation} \label{eq:SIE}
dx = \sum_{i=1}^m f_i(x(t), \kappa_i(t))\,dN_{\lambda_i}, 
\end{equation}
if $x$ is constant on any interval where none of the counters~$N_{\lambda_1}$,
$\dotsc$,~$N_{\lambda_m}$ jumps, and
\begin{equation*}
x(t)
=
x(t^{-})+ f_i(x(t^{-}),\kappa_i(t))
\end{equation*}
when $N_{\lambda_i}$ jumps at time~$t$. This definition can be naturally
extended to the vector case. Below, we present two lemmas for this class of
stochastic differential equations. The first lemma states a version of It\^o's
formula:

\begin{lemma}\label{lem:Ito}
Assume that $x$ is a solution of \eqref{eq:SIE}. Let $g$ be a real continuous
function. Then, $y(t) = g(x(t))$ is a solution of the stochastic differential
equation 
\begin{equation*}
dy 
= 
\sum_{i=1}^{m}\left[ g\left(x(t) +
f_i\left(x(t),\kappa_i(t)\right)\right) - g(x(t))\right] dN_{\lambda_i}, 
\end{equation*}
i.e., $y$ satisfies
\begin{equation*}
y(t) 
= 
y(t^-) + g\left(x(t^-) + f_i(x(t^-),\kappa_i(t))\right) - g(x(t^-)),
\end{equation*}
if the Poisson counter $N_{\lambda_i}$ jumps at time~$t$ and is
constant over any interval in which none of the counters
$N_{\lambda_1},\ldots,N_{\lambda_m}$ jumps.
\end{lemma}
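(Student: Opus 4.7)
The plan is to observe that, because the dynamics of $x$ are driven purely by the Poisson counters, the process $y=g(x)$ inherits a piecewise-constant structure, and then to compute its jumps directly using the definition of a solution of \eqref{eq:SIE}.

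First I would verify the between-jumps behavior. By the definition of a solution, $x$ is constant on any interval on which none of $N_{\lambda_1}, \dotsc, N_{\lambda_m}$ jumps. Since $g$ is a well-defined function, $y(t) = g(x(t))$ is also constant on such an interval; in particular, $y$ is right-continuous wherever $x$ is (the continuity of $g$ is not even needed for this step, but will matter implicitly when relating $y(t^-)$ to $g(x(t^-))$ via continuity of composition at the left limit).

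Next I would handle the jumps. Because the counters $N_{\lambda_1}, \dotsc, N_{\lambda_m}$ are independent Poisson processes, with probability one no two of them jump simultaneously, so I may consider each jump individually. Suppose $N_{\lambda_i}$ jumps at time $t$. By hypothesis, $x(t) = x(t^-) + f_i(x(t^-), \kappa_i(t))$. Applying $g$ on both sides gives
\begin{equation*}
y(t) = g(x(t)) = g\bigl(x(t^-) + f_i(x(t^-),\kappa_i(t))\bigr),
\end{equation*}
while $y(t^-) = g(x(t^-))$ by continuity of $g$ at $x(t^-)$ (using that $x$ has a left limit at $t$). Subtracting,
\begin{equation*}
y(t) - y(t^-) = g\bigl(x(t^-) + f_i(x(t^-),\kappa_i(t))\bigr) - g(x(t^-)),
\end{equation*}
which is exactly the jump size prescribed by the candidate SDE.

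Combining the two observations — constancy between jumps and the explicit jump size at each jump of $N_{\lambda_i}$ — shows that $y$ satisfies the definition of a solution of the stated equation. I do not anticipate a serious technical obstacle here; the only mildly delicate point is the almost-sure absence of simultaneous jumps among the independent Poisson counters, which lets us attribute each jump of $y$ unambiguously to a single index $i$ and thereby write the dynamics as a sum indexed by $i=1, \dotsc, m$.
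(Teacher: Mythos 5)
Your proposal is correct and follows essentially the same route as the paper's proof: apply $g$ to the jump relation $x(t)=x(t^-)+f_i(x(t^-),\kappa_i(t))$ to read off the jump of $y$, with the between-jumps constancy being immediate. The paper's version is just a one-line statement of the jump computation; your added remarks on piecewise constancy and the almost-sure non-simultaneity of the Poisson counters are harmless elaborations of the same argument.
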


\begin{proof}
Assume that the counter $N_{\lambda_i}$ jumps at time~$t$. Since $x$ is the
solution of the stochastic differential equation~\eqref{eq:SIE}, it follows that
$y(t)-y(t^-) = g\left(x(t^-) + f_i(x(t^-), \kappa_i(t))\right) - g(x(t^-)),$ as
desired.
\end{proof}

We also state the following lemma concerning the expectation of the solution to
the stochastic differential equation~\eqref{eq:SIE}:

\begin{lemma}\label{lem:expectation}
Assume that $x$ is a solution of \eqref{eq:SIE}. If the functions $f_1$,
$\dotsc$, $f_m$ are affine with respect to the second variable, then
\begin{equation}\label{eq:lem:expectation}
\frac{d}{dt}E[x(t)] = \sum_{i=1}^m E\bigl[f_i(x(t), E[\kappa_i(t)])\bigr]\lambda_i.
\end{equation}
\end{lemma}

\begin{proof}
By the assumption, for each $i\in [m]$ we can take real functions~$f_{i,1}$
and~$f_{i,2}$ such that $f_i(a,b) = f_{i,1}(a)+bf_{i,2}(a)$ for every $a,
b\in\mathbb{R}$. Let $t\geq 0$ and~$h>0$ be arbitrary. We have the following
three possibilities: (\emph{i}) no counter jumps on the time interval $[t,
t+h]$; (\emph{ii}) exactly one counter jumps in the interval; or (\emph{iii})
more than one counter jumps in the interval. The first case happens with
probability $1-(\lambda_1 + \cdots + \lambda_m )h+o(h)$. For the second case,
for each $i \in [m]$, one and the only one counter~$N_{\lambda_i}$ jumps on the
time interval~$[t, t+h]$ with probability $\lambda_ih + o(h)$. In this case we
have
\begin{equation*}
\begin{aligned}
x(t+h) 
&= x(t) + f_i(x(t), \kappa_i(\tau)) 
\\
&= x(t) + f_{i,1}(x(t)) +
\kappa_i(\tau)f_{i,2}(x(t))
\end{aligned}
\end{equation*} 
for some $\tau \in [t, t+h]$ and, therefore,
\begin{equation*}
\begin{aligned}
E[x(t+h)] 
&= 
E[x(t)] + E[f_{i,1}(x(t))] + E[\kappa_i(\tau)]E[f_{i,2}(x(t))] 
\\
&=
E[x(t)] + E\bigl[f_i(x(t), E[\kappa_i(t)])\bigr]
\end{aligned}
\end{equation*}
because $\kappa_i$ is a stationary stochastic process. Finally, the third case
occurs with probability $o(h)$. Summarizing, we have shown that
\begin{equation*}
\frac{E[x(t+h)]-E[x]}{h} 
= 
\frac{o(h)}{h} + \sum_{i=1}^{m}E\bigl[f_i(x(t), E[\kappa_i (t)])\bigr]\lambda_i, 
\end{equation*}
which proves \eqref{eq:lem:expectation} in the limit of $h\to 0$.
\end{proof}

\section{SIS Model with General Transmission and Recovery Times}\label{sec:SISmodel}

The aim of this section is to introduce the generalized networked susceptible-infected-susceptible (GeNeSIS) model, which will allow us to analyze the effect of {non-exponential} transmission and recovery times in the spreading dynamics.

\subsection{Generalized Networked SIS Model}\label{sec:gensis}

{We start by giving a brief overview of the standard SIS model (see, e.g., \cite{Nowzari2015a,Pastor-Satorras2015a}). Let $\mathcal G = (\mathcal V, \mathcal E)$ be an undirected and unweighted graph with $n$ nodes. In the SIS model, at a given (continuous) time~$t \geq 0$, each node can be in one of two possible states: {\it susceptible} or {\it infected}. If a neighbor of node~$i$ is infected, then this neighbor can infect node~$i$ with an instantaneous rate~$\beta_i$, where $\beta_i > 0$ is called the \emph{transmission rate} of node $i$. Therefore, while being infected, the neighbor attempts to infect node $i$ with the inter-event times following an exponential distribution of rate~$\beta_i$. On the other hand, when a node~$i$ is infected, it can randomly transition to the susceptible state with an instantaneous rate~$\delta_i > 0$, called the \emph{recovery rate} of node~$i$. This implies that the time it takes for an infected node $i$ to recover follows an exponential distribution of rate $\delta_i$.}

{Before we introduce the generalized networked susceptible-infected-susceptible (GeNeSIS) model, we introduce the following notations.} We describe the state of a node~$i\in \mathcal V$ by a $\{0,1\}$-valued continuous-time stochastic process, denoted by $z_i =\{z_i(t)\}_{t \in \mathbb{R}}$. We say that node~$i$ is susceptible (respectively, infected) at time~$t$ if $z_i(t) = 0$ (respectively, $z_i(t) = 1$). We assume that the function~$z_i$ is continuous from the right for all $i\in [n]$. Under this assumption, we say that node~$i$ becomes infected (respectively, becomes susceptible) at time~$t$ if $z_i(t^-)=0$ and~$z_i(t) = 1$ (respectively, $z_i(t^-)=1$ and~$z_i(t) = 0$). It is assumed that all nodes are susceptible before time~$t=0$, i.e., $z_i(t) = 0$ for $t<0$. We now introduce the GeNeSIS model as follows.

\begin{definition}\label{defn:GSIS}
We say that the family $z = \{z_i\}_{i \in [n]}$ of stochastic processes is a
\emph{generalized networked susceptible-infected-susceptible model}
(\emph{GeNeSIS model, for short}) if there exist a subset $\mathcal V_0 \subset[n]$,
as well as random variables $$0 = \tau_0^{ji}(t)<\tau_1^{ji}(t)<\cdots,$$ and
$\rho^i(t)>0$ satisfying the following conditions for all $i\in[n]$,
$j\in\mathcal N_i$, and~$t\geq 0$:
\begin{enumerate}

\item[a)] Node~$i$ becomes infected at time~$t=0$ if and only if $i\in \mathcal V_0$,
i.e., $\mathcal V_0$ is the initially infected subset.

\item[b)] Assume that node~$i$ becomes infected at time~$t$. Then, node~$i$
remains infected during the time interval $[t, t+\rho^i(t))$ and becomes
susceptible at time~$t+\rho^i(t)$, i.e., the random variable $\rho^i(t)$ is the
\emph{recovery time} of node~$i$;

\item[c)] If node $i$ becomes infected at time~$t$, then, until its recovery, the node attempts to infect node~$j\in\mathcal N_i$ at times $\{t+\tau_k^{ji}(t)\}_{k\in \mathbb{N}}${, i.e., if}  node~$j$ is susceptible at time~$t+\tau_k^{ji}(t)$ for any $k\in \mathbb{N}$, then node~$j$ becomes infected.
\end{enumerate}
\end{definition}

\begin{remark}
We call the random increments $\{\tau_{k}^{ji}(t) - \tau_{k-1}^{ji}(t)\}_{j\in
\mathcal N_i,\,t\geq 0,\,k\in\mathbb{N}}$ the \emph{transmission times} of
node~$i$, since the difference~$\tau_{k}^{ji}(t) - \tau_{k-1}^{ji}(t)$
represents the time between infection attempts from an infected node~$i$ towards
a neighboring node~$j$. Note that, when all the recovery and transmission times
follow exponential distributions, the GeNeSIS model recovers the standard
networked SIS model {described at the beginning of this subsection}.
\end{remark}

Notice that the origin (i.e., $z_i= 0$ for all $i\in\mathcal V$) is an absorbing state of the GeNeSIS dynamics. In what follows, we will refer to the origin as the \emph{infection-free} equilibrium. The aim of this paper is to quantify the transient dynamics of the generalized SIS model according to the following definition:

\begin{definition}\label{defn:decay}
The \emph{exponential decay rate} of the GeNeSIS model is defined by
\begin{equation*}
\lambda = - \sup_{\mathcal V_0 \subset [n]} \limsup_{t\to\infty} \frac{\log \sum_{i=1}^n E[z_i(t)]}{t}. 
\end{equation*}
\end{definition}

{Since the sum~$\sum_{i=1}^n E[z_i(t)]$ equals the expected number of infected nodes at time $t$, the decay rate~$\lambda$ quantifies how fast the infectious spreading process dies out in the network (in average). Besides quantifying the impact of contagious spreading processes over networks~\cite{Lajmanovich1976,Ganesh2005,VanMieghem2013}, the exponential decay rate has been used as a standard tool for measuring the performance of strategies aiming to contain epidemic outbreaks~\cite{Wan2008IET,Preciado2014,Han2015a,AbadTorres2016}. We further remark that, although exponential \emph{distributions} are not necessarily appropriate for modeling realistic transmission and recovery times as discussed in the Introduction, the exponential \emph{decay rate} is still a valid quantity for measuring the spreading capability of epidemic processes. }

\subsection{Phase-type Transmission and Recovery Times}

In this paper, we consider the GeNeSIS model with transmission and recovery
times following phase-type distributions~\cite{Asmussen1996}. In what follows,
we briefly describe this class of probability distributions. Consider a
time-homogeneous Markov process~$x$ in continuous-time with $p+1$ ($p\in
\mathbb{N}$) states (also called \emph{phases}) such that the
states~$1,\dotsc,p$ are transient and the remaining state~$p+1$ is absorbing.
The infinitesimal generator of the process is then necessarily of the form
\begin{equation}\label{eq:inf-gen}
\begin{bmatrix}
T & b\\
0 & 0
\end{bmatrix}
,\quad 
b = -T\onev_p, 
\end{equation}
where $T \in \mathbb{R}^{p\times p}$ is an invertible Metzler matrix with
non-positive row-sums. Let 
\begin{equation*}
\begin{bmatrix}
\phi\\0 
\end{bmatrix}\in \mathbb{R}^{p+1}\quad (\phi \in
\mathbb{R}^p)
\end{equation*}
denote the initial distribution of the Markov process $x$, i.e., 
\begin{equation*}
P(x(0) = m) = \begin{cases}
\phi_m,&m\in[p], 
\\
0,&m=p+1. 
\end{cases}
\end{equation*}
Then, the time to absorption into the state~$p+1$  is a random variable following a \emph{phase-type distribution}, which we denote by the pair~$(\phi, T)$. In the rest of the paper, we make the following assumption on the distribution of (random) transmission and recovery times in the GeNeSIS model:

\begin{assumption}\label{assm:phaseTYpe}
Transmission and recovery times of all nodes follow phase-type
distributions $(\phi, T)$ and~$(\psi, R)$, respectively.
\end{assumption}

{The class of phase-type distributions include various distributions of theoretical and practical interests. For example, if we choose the parameters~$p=1$, $T = -\beta$, and~$\phi = 1$, then the phase-type distribution $(\phi, T)$ equals an exponential distribution with mean $1/\beta$. A phase-type distribution can also represent various classes of distributions including the Erlang, Coxian, and hyper-exponential distributions~\cite{Commault2003}. Furthermore, it is known that the set of phase-type distributions is dense in the set of positive-valued distributions~\cite{Cox1955}. Therefore, it is possible to approximate an arbitrary distribution by a phase-type distribution within any given accuracy. Moreover, there are efficient numerical algorithms for finding the parameters of an approximating phase-type distribution~\cite{Asmussen1996}. Hence, under  Assumption~\ref{assm:phaseTYpe}, the GeNeSIS model allows us to efficiently approximate realistic spreading processes having non-Markovian transmission and recovery distributions.}

\section{Vectorial Representations} \label{sec:vectorRep}

The aim of this section is to introduce a vectorial representation of the GeNeSIS model under Assumption~\ref{assm:phaseTYpe}. We start our exposition by providing a vectorial representation of an arbitrary phase-type distribution (Subsection~\ref{subsec:vectorRepforPH}), and then present a vectorial representation of the GeNeSIS model~(Subsection~\ref{subsec:vectorRepforGSIS}) that shall be used in Section~\ref{sec:analysis} for analyzing the decay rate of the spreading model.

\subsection{Vectorial Representation of Phase-Type Distributions}\label{subsec:vectorRepforPH}

In what follows, we use the following notation: For $m,m'\in [p]$, let $E_{mm'}$ denote the $p\times p$ matrix whose entries are all zeros, except for its $(m,m')$-th entry being one. Also, let $e_m$ denote the $m$-th vector of the canonical basis in $\mathbb{R}^p$ (i.e., all the entries of~$e_m$ are zeros, except for the $m$-th entry being one). Finally, given a probability distribution $\phi$ on $[p]$, we say that an $\mathbb{R}^p$-valued random variable~$x$ follows the distribution~$\phi$, denoted by $x\sim \phi$, if
\begin{equation*}
P(x =
e_m) = \phi_m
\end{equation*}
for every $m\in [p]$.

{By identifying the state space of the underlying Markov process by the set of vectors~$\{e_1, \dotsc, e_p, 0\}\subset \mathbb{R}^p$, the following proposition allows us to represent the phase-type distribution as the  \emph{exit time} (see \cite[p.~117]{Oksendal2003}) of a vectorial stochastic differential equation:} 

\begin{proposition}\label{prop:PHdist}
Let $(\phi, T)$ be a phase-type distribution having $p+1$ phases (i.e., $T \in
\mathbb{R}^{p\times p}$) and define the vector~$b \in \mathbb{R}^p$ as in
\eqref{eq:inf-gen}. Consider the $\mathbb{R}^p$-valued stochastic differential
equation
\begin{equation}\label{eq:PHnonrenewal}
dx 
= 
\sum_{m=1}^p
\sum_{m'=1}^p(E_{m'm} - E_{mm}) x\,dN_{T_{mm'}} 
- 
\sum_{m=1}^p E_{mm} x \,dN_{b_m},
\end{equation}
with random initial condition~$x(0) \sim \phi$. Then, the random variable
\begin{equation}\label{eq:def_rho}
\begin{aligned}
\rho
&= 
\min\{t > 0\colon x(t) = 0\}
\\
&=
\min\{t > 0\colon \exists\,m\in [p] \text{ such that }x(t^-) = e_m,
\\&\hspace{3cm}
\text{ and }N_{b_m} \text{ jumps at time~$t$}\}
\end{aligned}
\end{equation}
follows $(\phi, T)$.
\end{proposition}

\begin{proof}
{A detailed investigation of the differential equation~\eqref{eq:PHnonrenewal} shows that} the solution~$x$ of \eqref{eq:PHnonrenewal} is a Markov process with state space~$\{e_1, \dotsc, e_p, 0\}\subset \mathbb{R}^p$ and infinitesimal generator given by~\eqref{eq:inf-gen}. {This fact specifically shows that} the second equality in~\eqref{eq:def_rho} is true. Furthermore, since $x(0)$ follows the probability distribution~$\phi$, the time to absorption of the stochastic process~$x$ into the absorbing state~$0$ follows the phase-type distribution~$(\phi, T)$. This completes the proof of the proposition.
\end{proof}

{The stochastic differential equation~\eqref{eq:PHnonrenewal} shall be used for describing phase-type recovery events in the proof of our first main result (Theorem~\ref{thm:main:stability}). On the right-hand side of the stochastic differential equation~\eqref{eq:PHnonrenewal}, the first term represents the transitions between non-absorbing states, while the second term represents the transitions into the absorbing state.}

{In order to derive stochastic differential equations for transmission events, we appropriately modify the second term in the stochastic differential equation~\eqref{eq:PHnonrenewal} and derive a vectorial representation for \emph{renewal sequences} (see, e.g., \cite[Chapter~9]{Cinlar1975a}) whose inter-renewal times follow a phase-type distribution:}

\begin{proposition}\label{prop:PHrenewal}
Let $(\phi, T)$ be a phase-type distribution. Let $e_\phi = \{e_\phi(t)\}_{t\geq
0}$ be independent and identically distributed random variables such that $e_\phi(t)$ follows the
distribution~$\phi$ for all $t\geq 0$. Consider the $\mathbb{R}^p$-valued
stochastic differential equation
\begin{equation}\label{eq:PHrenewal}
\begin{multlined}[.8\linewidth]
dx 
= 
\sum_{m=1}^p \sum_{m'=1}^p
(E_{m'm} - E_{mm}) x\,dN_{T_{mm'}} 
\\
+ 
\sum_{m=1}^p 
(e_\phi(t) e_m^\top - E_{mm}) x \,dN_{b_m}
\end{multlined}
\end{equation}
with a random initial state $x(0)$ following the distribution $\phi$. Define
$\tau_0 = 0$ and let $0 < \tau_1 < \tau_2 < \cdots$ be the (random) times at
which $x(t^-) = e_m$ and the counter $N_{b_m}$ jumps for some $m\in [p]$. Then,
the increments~$\{\tau_k-\tau_{k-1}\}_{k\in \mathbb{N}}$ are independent and
identically distributed random variables following $(\phi, T)$.
\end{proposition}

\begin{proof}
Since the stochastic differential equation given in the proposition is equivalent to the one in Proposition~\ref{prop:PHdist} on the time interval $[0, \tau_1)$, the random variable~$\tau_1$ has the same probability distribution as the random variable~$\rho$ defined in~\eqref{eq:def_rho} and, therefore, follows the phase-type distribution~$(\phi, T)$ by Proposition~\ref{prop:PHdist}. Furthermore, by the definition of $\tau_1$, there exists an $m\in[p]$ such that $x(\tau_1) = (e_\phi(\tau_1) e_m^\top - E_{mm})x(\tau_1^-) + x(\tau_1^-) = e_\phi(\tau_1)$, which follows $\phi$. Therefore, by the same argument as above, we see that the random increment $\tau_2-\tau_1$ also follows $(\phi, T)$. An induction completes the proof.
\end{proof}

{Using Propositions~\ref{prop:PHdist} and~\ref{prop:PHrenewal} as the machinery for describing phase-type recovery and transmission events, in the next subsection we present a set of vectorial stochastic differential equations for describing the whole GeNeSIS spreading model.}

\subsection{Vectorial Representation of the Generalized SIS Model
}\label{subsec:vectorRepforGSIS}

In this subsection, we use Propositions~\ref{prop:PHdist} and 
\ref{prop:PHrenewal} to provide a vectorial representation of the GeNeSIS model
under Assumption~\ref{assm:phaseTYpe}. Let $A = [a_{ij}]_{i,j}$ be the adjacency
matrix of the graph~$\mathcal G$. Define the vectors $b$ and $d$ by \eqref{eq:inf-gen} and 
\begin{equation*}
d = -R \onev_q \in \mathbb{R}^q, 
\end{equation*}
respectively. For $\ell, \ell'\in [q]$, let $F_{\ell\ell'}$ denote the $q\times q$ matrix whose entries are all zeros, except for its $(\ell,\ell')$-th entry being one. Also, let $f_\ell$ denote the $\ell$-th vector in the canonical basis of $\mathbb{R}^q$. Finally, for $i, j\in[n]$ and~$\gamma>0$, we let $N^{ij}_\gamma$ and~$N^i_\gamma$ denote independent Poisson counters with rate $\gamma$. The next theorem is the first main result of this paper:

\begin{theorem}\label{thm:vectorRepresentation}
For each $i\in [n]$ and~$j\in\mathcal N_i$, let $e_\phi^{ji} = \{e_\phi^{ji}(t)
\}_{t\geq 0}$ and~$f_\psi^i = \{f_\psi^i(t)\}_{t\geq 0}$ be independent and identically distributed random variables such that 
\begin{equation*}
e_\phi^{ji}(t)\sim \phi
, 
\quad 
f^i_\psi(t)\sim \psi
\end{equation*}
for all $t\geq 0$. Let $x^{ji}$ and~$y^i$ be, respectively, the $\mathbb{R}^p$-
and~$\mathbb{R}^q$-valued stochastic processes satisfying the following
stochastic differential equations:
\begin{align}
dx^{ji} 
&=
\sum_{m=1}^p \sum_{m'=1}^p
(E_{m'm} - E_{mm}) x^{ji}\,dN_{T_{mm'}}^{ji} 
\notag
\\
&\myindent+
\sum_{m=1}^p (e_\phi^{ji} e_m^\top - E_{mm}) x^{ji}\,dN_{b_m}^{ji} 
\notag
\\
&\myindent-
x^{ji}\sum_{\ell=1}^q y_\ell^i \,dN_{d_\ell}^i 
+ 
e_\phi^{ji}(1-\onev^\top_q y^i)
\sum_{k=1}^n a_{ik} \sum_{m=1}^p x_m^{ik} \,dN_{b_m}^{ik}, 
\label{eq:SDE:transmission}
\\
dy^i 
&= 
\sum_{\ell=1}^q\sum_{\ell'=1}^q
(F_{\ell'\ell} - F_{\ell\ell})y^i\,dN_{R_{\ell \ell'}}^i 
\notag
\\
&\myindent
- 
y^i \sum_{\ell=1}^q y_\ell^i \,dN_{d_\ell}^i  
+
f_\psi^i (1-\onev^\top_q y^i)
\sum_{k=1}^n a_{ik} \sum_{m=1}^p x_m^{ik} \,dN_{b_m}^{ik}, 
\label{eq:SDE:recovery}
\end{align}
where for an initially infected subset $\mathcal V_0 \subset[n]$, the initial conditions satisfy
\begin{equation}\label{eq:initCond}
\begin{cases}
x^{ji}(0)\sim \phi, \ y^{i}(0)\sim \psi, &\text{if $i\in \mathcal V_0$}, 
\\
x^{ji}(0)
=0_p,\ y^i(0) = 0_q, &\text{otherwise.}
\end{cases}
\end{equation} 
Then, the generalized networked SIS model in Definition \ref{defn:GSIS} with
transmission and recovery times following, respectively, phase-type
distributions~$(\phi, T)$ and~$(\psi, R)$ can be equivalently described as the
family of stochastic processes $z = \{z_i\}_{i=1}^n$, where
\begin{equation}\label{eq:def:z}
z_i(t) = \onev^\top_q y^i(t)
\end{equation}
for all $t\geq 0$ and~$i\in [n]$.
\end{theorem}

{The representations of the GeNeSIS model as the set of stochastic differential equations~\eqref{eq:SDE:transmission} and~\eqref{eq:SDE:recovery} allows us to analyze the model via symbolic computations, as will be illustrated in Section~\ref{sec:analysis}. Before proceeding to the proof of Theorem~\ref{thm:vectorRepresentation}, we provide an intuitive explanation of the theorem. As is shown in Corollary~\ref{cor:} below, the variable $x^{ji}$ is related to spread of the infection from an infected node~$i$ to a susceptible node~$j$, while $y^i$ controls the recovery process of node~$i$. Specifically, on the right-hand side of the  differential equation~\eqref{eq:SDE:transmission}, the first two terms have the same structure as in \eqref{eq:PHrenewal} and correspond to renewal sequences of transmissions. The third and fourth terms represent the recovery and infection of node~$i$, respectively. Similarly, on the right-hand side of the  differential equation~\eqref{eq:SDE:recovery}, the first two terms correspond to the phase-type recovery and have almost the same structure as in~\eqref{eq:PHnonrenewal}, while the remaining last term is related to the infection of the node~$i$.}

{In order to prove Theorem~\ref{thm:vectorRepresentation}, we first state the following lemma.} 

\begin{lemma}\label{lem:basicProperties}
The following statements are true for all $i\in [n]$, $j\in \mathcal N_i$, and
$t\geq 0$: 
\begin{enumerate}
\item $\onev^\top_p x^{ji}(t) = \onev^\top_q y^i(t)$. 
\item $x^{ji}(t) \in 
\{0_p, e_1, \dotsc, e_p \}$. 
\item $y^i(t) \in \{0_q, f_1, \dotsc, f_q\}$.
\end{enumerate}
\end{lemma}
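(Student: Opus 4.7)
The plan is to proceed by induction on the ordered sequence $0=t_0<t_1<t_2<\cdots$ of jump times of the Poisson counters appearing in \eqref{eq:SDE:transmission}--\eqref{eq:SDE:recovery}. On every open interval between consecutive jumps all right-hand sides vanish, so $x^{ji}$ and $y^i$ are piecewise constant; hence it suffices to verify the three claims at $t=0$ and to check that each jump preserves them. The base case is immediate: for $i\in\Lambda$ the laws $\phi$ and $\psi$ are supported on $\{e_1,\dotsc,e_p\}$ and $\{f_1,\dotsc,f_q\}$ respectively, so $\onev^\top_p x^{ji}(0)=\onev^\top_q y^i(0)=1$, while for $i\notin\Lambda$ both sides equal $0$.

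For the inductive step at a jump time $t$ (at which, with probability one, exactly one counter fires) I would split on its type. If a phase-transition counter $N^{ji}_{T_{mm'}}$ or $N^i_{R_{\ell\ell'}}$ fires, only the first term of \eqref{eq:SDE:transmission} or \eqref{eq:SDE:recovery} is active and, by the inductive hypothesis, acts as the basis swap $e_m\mapsto e_{m'}$ (or $f_\ell\mapsto f_{\ell'}$) when the relevant indicator is on and as the null map otherwise; all three claims persist. If a recovery counter $N^i_{d_\ell}$ fires, the coefficient $y^i_\ell(t^-)$ equals $1$ precisely when $y^i(t^-)=f_\ell$ (claim~3), in which case the third term of \eqref{eq:SDE:transmission} and the second term of \eqref{eq:SDE:recovery} simultaneously reset $y^i$ and every $x^{ji}$ with $j\in\mathcal N_i$ to zero, and otherwise no update occurs; set membership and the sum identity are preserved on both sides.

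The main obstacle is the remaining case, in which an infection-attempt counter $N^{\alpha\beta}_{b_m}$ (with $\alpha\in\mathcal N_\beta$) fires, because the same event is coupled into three places: the second term of $dx^{\alpha\beta}$ (renewing the transmission chain from $\beta$ to $\alpha$), the fourth term of $dx^{\gamma\alpha}$ for every $\gamma\in\mathcal N_\alpha$ (initializing the outgoing chains from $\alpha$), and the last term of $dy^\alpha$ (infecting $\alpha$). All three contributions are gated by $x^{\alpha\beta}_m(t^-)$, which by claim~2 is nonzero only when $x^{\alpha\beta}(t^-)=e_m$. In that sub-case the renewal sends $x^{\alpha\beta}(t)=e_\phi^{\alpha\beta}(t)\in\{e_1,\dotsc,e_p\}$ while leaving $y^\beta$ untouched, preserving the identity for the pair $(\alpha,\beta)$. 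The remaining updates are further gated by the factor $1-\onev^\top_q y^\alpha(t^-)$, which by claims~1 and~3 equals $1$ iff $y^\alpha(t^-)=0_q$ iff $x^{\gamma\alpha}(t^-)=0_p$ for every $\gamma\in\mathcal N_\alpha$; this is precisely the consistency needed to have $y^\alpha$ jump from $0_q$ to $f_\psi^\alpha(t)$ and each $x^{\gamma\alpha}$ jump from $0_p$ to $e_\phi^{\gamma\alpha}(t)$ at the same instant, restoring both sides of the identity to $1$. The crux of the argument is thus that all three claims of the lemma must be invoked simultaneously at $t^-$ to verify this coordinated multi-process jump.
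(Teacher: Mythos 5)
Your proof is correct and follows essentially the same route as the paper: a case analysis over the five counter types showing that each jump preserves the claimed set memberships, with the infection counter $N^{ik}_{b_m}$ being the only delicate case. The only cosmetic difference is that the paper dispatches claim~1 separately by observing that $\epsilon=\onev^\top_p x^{ji}-\onev^\top_q y^i$ satisfies $d\epsilon=\epsilon\sum_{\ell}y^i_\ell\,dN^i_{d_\ell}$ and hence stays at zero, whereas you fold all three claims into one joint induction on the jump times --- which, if anything, makes the logical dependence among the three claims (needed in the $N^{ik}_{b_m}$ case) more explicit.
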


\begin{proof}
To prove the first statement, fix $i\in [n]$ and~$j\in\mathcal N_i$, and let
\begin{equation*}
\epsilon = \onev^\top_p x^{ji} - \onev^\top_q y^i.
\end{equation*}
From \eqref{eq:SDE:transmission} and~\eqref{eq:SDE:recovery}, we obtain that
\begin{equation*}
d\epsilon = \epsilon \sum_{\ell=1}^qy^i_\ell \,dN^i_{d_\ell}.
\end{equation*}
This equation implies that $\epsilon$ is constant over~$[0, \infty)$ because
$\epsilon(0) = 0$. Therefore, we have $\epsilon(t) = \epsilon(0) =0$ for every
$t\geq 0$, completing the proof of the first statement.

Let us prove the second and third statements. Notice that $x^{ji}$ and~$y^i$ are
piecewise constant since they are the solutions of the stochastic differential
equations~\eqref{eq:SDE:transmission} and~\eqref{eq:SDE:recovery}. Moreover,
their values can change only when one of the Poisson counters in the stochastic
differential equations jumps. Finally, if we let 
\begin{equation*}
\begin{aligned}
U &= \{0_p, e_1, \dotsc, e_p \},\\
V &= \{0_q, f_1, \dotsc,
f_q\}, 
\end{aligned}
\end{equation*}
then, \eqref{eq:initCond} shows $x^{ji}(0)
\in U$ and~$y^i(0)\in V$. Therefore, it is sufficient to show that the jump of any Poisson counter
leaves the sets $U$ and~$V$ invariant.

Let $t > 0$ be arbitrary and assume that $x^{ji}(t^-)\in U$ and~$y^i(t^-) \in V$. The stochastic differential equations~\eqref{eq:SDE:transmission} and~\eqref{eq:SDE:recovery} have the following five different types of Poisson counters: ${N_{b_m}^{ik}}$, ${N_{d_\ell}^i}$, $N_{T_{m\ssp m}}^{ji}$, ${N_{R_{\ell\ssp \ell}}^i}$, and ${N_{b_m}^{ji}}$. {Careful investigations of the stochastic differential equations~\eqref{eq:SDE:transmission} and~\eqref{eq:SDE:recovery} show} that any of these counters leave the sets $U$ and~$V$ invariant {as follows. For example, when} $N_{b_m}^{ik}$ jumps at time~$t$, we have that
\begin{equation}\label{eq:countercase4:1}
\begin{aligned}
x^{ji}(t) 
=&\,
x^{ji}(t^-)  + e_\phi^{ji}(t)(1-\onev^\top_q y^i(t^-)) a_{ik} x_m^{ik} (t^-)
\\
=&\,
\begin{cases}
x^{ji}(t^-), & \text{if $\onev^\top_q y^i(t^-)=1$, $a_{ik}=0$, or $x_m^{ik}(t^-) =0$}, 
\\
e_\phi^{ji}(t),&\text{otherwise}.
\end{cases}
\end{aligned}
\end{equation}
Also, 
\begin{equation}\label{eq:countercase4:2}
\begin{aligned}
y^i(t) 
&=
y^i(t^-)  + f_\psi^i(t)(1-\onev^\top_q y^i(t^-)) a_{ik} x_m^{ik}(t^-)
\\
&=
\begin{cases}
y^i(t^-), & \text{if $\onev^\top_q y^i(t^-)=1$, $a_{ik}=0$, or $x_m^{ik}(t^-) =0$}, 
\\
f_\psi^i(t),&\text{otherwise}.
\end{cases}
\end{aligned}
\end{equation}
Therefore, this jump leaves $U$ and~$V$ invariant.
Similarly, when $N_{d_\ell}^i$ jumps at time~$t$, we obtain 
\begin{equation}\label{eq:countercase5:1}
\begin{aligned}
x^{ji}(t)
&=
x^{ji}(t^-) - x^{ji}(t^-) y_\ell^i(t^-)
\\
&=
\begin{cases}
0, & \text{if $y_\ell^i(t^-) = 1$}, 
\\
x^{ji}(t^-),&\text{otherwise}, 
\end{cases}
\end{aligned}
\end{equation}
and hence $x^{ji}(t)\in U$. We also have $y^i(t)\in V$ because
\begin{equation}\label{eq:countercase5:2}
\begin{aligned}
y^i(t)
&=
y^i(t^-) - y^i(t^-) y_\ell^i(t^-)
\\
&=
\begin{cases}
0, & \text{if $y_\ell^i(t^-) = 1$}, 
\\
y^i(t^-), &\text{otherwise}.
\end{cases}
\end{aligned}
\end{equation}
{The cases of other counters can be analyzed in a similar manner and, therefore, the proofs are omitted.} %
\end{proof}

{The next corollary of Lemma~\ref{lem:basicProperties} clarifies the roles of the variables $x^{ji}$, $y^i$ and the various Poisson counters in the stochastic differential equations~\eqref{eq:SDE:transmission} and \eqref{eq:SDE:recovery}.}

\begin{corollary}\label{cor:}
The following statements are true for every $i\in [n]$, $j\in \mathcal N_i$, and~$t\geq 0$:
\begin{enumerate}
\item Node~$i$ attempts to infect node~$j$ at time~$t$, if and only if,
$x_m^{ji}(t^-) = 1$ and~$N_{b_m}^{ji}$ jumps at time~$t$ for some
$m\in [p]$.

\item Node~$i$ becomes susceptible at time~$t$, if and only if, $y_\ell^i(t^-) =
1$ and~$N_{d_\ell}^i$ jumps at time~$t$ for some $\ell \in [q]$.
\end{enumerate}
\end{corollary}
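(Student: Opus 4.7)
My approach would be to read off both biconditionals directly from the vectorial construction of Theorem~\ref{thm:vectorRepresentation}, using the phase-type interpretations provided by Propositions~\ref{prop:PHdist} and~\ref{prop:PHrenewal} together with the case-by-case analysis already carried out in the proof of Lemma~\ref{lem:basicProperties}. Since the corollary is really just translating the vector dynamics back into the language of Definition~\ref{defn:GSIS}, the plan is to match each event in the GeNeSIS model with the specific Poisson-jump mechanism that implements it in the SDEs \eqref{eq:SDE:transmission} and \eqref{eq:SDE:recovery}.

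For statement (1), I would first identify \eqref{eq:SDE:transmission} with the renewal SDE of Proposition~\ref{prop:PHrenewal}, augmented by a restart term that activates only when node $i$ becomes (re)infected by one of its neighbors. During any interval in which $i$ stays infected, $x^{ji}$ behaves exactly as the phase of the absorbing Markov chain associated with $(\phi, T)$, so by Proposition~\ref{prop:PHrenewal} the successive absorption times are precisely the transmission-attempt times $\tau_k^{ji}$ of Definition~\ref{defn:GSIS}(c). The essential observation, already implicit in Case~3 of the proof of Lemma~\ref{lem:basicProperties}, is that the only Poisson event capable of triggering such an absorption is a jump of $N_{b_m}^{ji}$ at an instant when $x_m^{ji}(t^-) = 1$; every other counter either leaves $x^{ji}$ unchanged or produces an intra-phase transition. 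This gives both directions of the biconditional.

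For statement (2), I would apply parallel reasoning to $y^i$ and \eqref{eq:SDE:recovery}. From the moment $y^i$ is initialized (either at $t=0$ for $i\in\Lambda$ or through the restart term at a later infection time), it tracks the phase of the recovery Markov chain associated with $(\psi, R)$, and Proposition~\ref{prop:PHdist} identifies its absorption time with the recovery time $\rho^i$. Combining this with the identity $z_i(t) = \onev_q^\top y^i(t)$ from Theorem~\ref{thm:vectorRepresentation} and with Case~5 of the proof of Lemma~\ref{lem:basicProperties}, which shows that $y^i$ can drop to $0_q$ only through a jump of $N_{d_\ell}^i$ while $y_\ell^i(t^-) = 1$, immediately yields the equivalence. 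The one point requiring care, and which I would flag as the main obstacle, is verifying that the restart terms carrying the indicator factor $(1-\onev_q^\top y^i)$ do not manufacture spurious absorption events that should be miscounted as additional infection attempts or recoveries; however, this is precisely what the invariance analysis in the proof of Lemma~\ref{lem:basicProperties} already rules out, so no new machinery is needed.
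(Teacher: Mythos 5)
Your proposal is correct in substance and, like the paper's proof, rests entirely on the case-by-case jump analysis carried out in the proof of Lemma~\ref{lem:basicProperties}; for statement (2) the two arguments essentially coincide (both observe that $y^i$, and hence $z_i=\onev_q^\top y^i$, can drop to zero only through Case~5, i.e., only when some $N_{d_\ell}^i$ jumps with $y_\ell^i(t^-)=1$). For statement (1) you take a mildly different route. You ground the notion of an ``infection attempt'' in the renewal structure of $x^{ji}$ itself (Case~3 together with Proposition~\ref{prop:PHrenewal}), whereas the paper grounds it in its observable effect on the neighbor: it notes that $z_j$ can switch from $0$ to $1$ only in Case~4, i.e., only when some $N_{b_m}^{ji}$ jumps with $x_m^{ji}(t^-)=1$ and $j$ susceptible, and then strips off the susceptibility condition to characterize the attempt times. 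The paper's route matches Definition~\ref{defn:GSIS}(c) more directly, since there an attempt is operationally an event that infects $j$ whenever $j$ is susceptible; your route needs the additional (easy) remark that the very same counter $N_{b_m}^{ji}$ that renews $x^{ji}$ also drives the reinitialization terms in the equations for $y^j$ and the $x^{kj}$, so the renewal times of $x^{ji}$ are exactly the instants at which a susceptible $j$ would become infected. Two small imprecisions are worth fixing: the renewal events of \eqref{eq:SDE:transmission} are restarts to $e_\phi^{ji}$, not absorptions (absorption of $x^{ji}$ to $0_p$ occurs only via $N_{d_\ell}^i$ in Case~5); and it is not quite true that every other counter ``either leaves $x^{ji}$ unchanged or produces an intra-phase transition,'' since $N_{d_\ell}^i$ can annihilate $x^{ji}$ and $N_{b_m}^{ik}$ can reinitialize it from $0_p$ --- what matters is only that none of these other counters can generate an attempt event. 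Neither issue affects the validity of your argument.
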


\begin{proof}
From the proof of Lemma~\ref{lem:basicProperties}, the value of $z_i= \onev^\top_p x^{ji} = \onev^\top_q y^i$ can change from zero to one {when and only when a counter $N_{b_m}^{ik}$ jumps. Therefore, from equations~\eqref{eq:countercase4:1} and~\eqref{eq:countercase4:2}}, we see that node~$i$ becomes infected at time $t\geq 0$, if and only if, $z_i(t^-) = 0$, node~$i$ is adjacent to node~$k$, $x^{ik}_m(t^-)=1$, and~$N_{b_m}^{ik}$ jumps at time~$t$ for some $m \in [p]$. Therefore, node~$k$ attempts to infect node~$i$ at time~$t$ if and only if $x^{ik}_m(t^-)=1$ and~$N_{b_m}^{ik}$ jumps at time~$t$ for some $m \in [p]$, as desired. 

Similarly, we see that the value of $z^i$ can change from one to zero {when and only when a counter $N_{d_\ell}^i$ jumps. Therefore,  equations~\eqref{eq:countercase5:1} and~\eqref{eq:countercase5:2}} imply that node~$i$ becomes susceptible at time~$t$ if and only if $y^i_\ell(t^-)=1$ and the counter~$N_{d_\ell}^i$ jumps at time~$t$ for some $\ell \in [q]$, showing the second assertion of the corollary.
\end{proof}

Now we are ready to prove 
Theorem~\ref{thm:vectorRepresentation}.

\begin{proof}[Proof of Theorem~\ref{thm:vectorRepresentation}]
We prove that the family of stochastic processes $z = \{z_i\}_{i=1}^n$, defined by \eqref{eq:SDE:transmission}--\eqref{eq:def:z}, satisfies items a)-c) in Definition~\ref{defn:GSIS}, as well as Assumption~\ref{assm:phaseTYpe}. Item~a) is true by the given initial conditions. Let us prove item~b). Assume that node~$i$ becomes infected at time~$t\geq0$; hence, we have that either $t=0$ or $t>0$. In this proof, we only consider the case $t=0$, as the other case can be proved in a similar way. Let $\rho$ be the earliest time at which $y^i(\rho)=0_q$. Then, on the time interval~$[0, \rho)$, the stochastic differential equation \eqref{eq:SDE:recovery} is equivalent to
\begin{equation*}
dy^i 
= 
\sum_{\ell = 1}^q  \sum_{\ell'=1}^q
(F_{\ell'\ell} -
F_{\ell\ell})y^i\,dN_{R_{\ell \ssp \ell}}^i - \sum_{\ell=1}^q F_{\ell\ell}y^i
\,dN_{d_\ell}^i
\end{equation*}
since $y^i y_\ell^i = F_{\ell\ell} y^i$. Moreover, the vector $y^i(0)$ follows the distribution~$\psi$. Therefore, by Proposition~\ref{prop:PHdist}, we see that $\rho$ follows the phase-type distribution~$(\psi, R)$, proving item~b) in Definition~\ref{defn:GSIS} under Assumption \ref{assm:phaseTYpe} for $t=0$. {A similar discussion using Proposition~\ref{prop:PHrenewal} shows that} the family of stochastic processes~$z$ satisfies item~c) in Definition~\ref{defn:GSIS} under Assumption~\ref{assm:phaseTYpe}. This completes the proof of the theorem.
\end{proof}

Before presenting our analysis of the decay rate in the next section, we state another corollary of Lemma~\ref{lem:basicProperties}:

\begin{corollary}\label{cor:xi}
The family of stochastic processes 
\begin{equation}\label{eq:def:xi}
\xi = \{x^{ji}, y^{i}\}_{i\in[n],
j\in\mathcal N_i}, 
\end{equation}
where $x^{ji}$ and~$y^i$ are the solutions of the stochastic differential equations~\eqref{eq:SDE:transmission} and~\eqref{eq:SDE:recovery}, is a Markov process. Moreover, the states of $\xi$ are all transient except for the absorbing state~$\alpha$ at which $x^{ji}=0$ and~$y^{i}=0$ for all $i\in[n]$ and~$j\in\mathcal N_i$. Furthermore, the size of the state space of~$\xi$ equals
\begin{equation*}
N_\xi = \prod_{i=1}^n (1+p^{\abs{\mathcal N_i}} q),
\end{equation*}
where $\abs{\mathcal N_i}$ denotes the size of the neighbor set~$\mathcal N_i$.
\end{corollary}

\begin{proof}
A careful investigation of the proof of Lemma~\ref{lem:basicProperties} shows the first and the second claims. The third claim is an immediate consequence from the constraint $\onev^\top_p x^{ji}(t) = \onev^\top_q y^i(t)$ that was proved in Lemma~\ref{lem:basicProperties}.
\end{proof}

\section{{Decay Rate Analysis}}\label{sec:analysis}

{In this section, we use our previous results to bound the exponential decay rate of the GeNeSIS model under Assumption~\ref{assm:phaseTYpe}. We begin by presenting a characterization of the decay rate in terms of the eigenvalues of a matrix whose size grows exponentially with respect to the model parameters. To overcome the	 difficulty of computing the eigenvalues of a very large matrix, we then present an alternative bound on the decay rate based on the representation of the GeNeSIS model as the stochastic differential equations~\eqref{eq:SDE:transmission} and~\eqref{eq:SDE:recovery}.}

{Throughout this section, we consider the GeNeSIS model with transmission and recovery times following, respectively, phase-type distributions $(\phi, T)$ and~$(\psi, R)$ (i.e., satisfying~Assumption~\ref{assm:phaseTYpe}). The following proposition illustrates the computational difficulty in computing the exponential decay rate of the GeNeSIS model:}

\begin{proposition}\label{prop:exact}
Let $Q \in \mathbb{R}^{N_\xi \times N_\xi}$ be the transition rate matrix of the Markov process~$\xi$ (see \eqref{eq:def:xi} for the definition of $\xi$), and let $r < 0$ be the maximum real part of the non-zero eigenvalues of~$Q$. Then, the exponential decay rate of the GeNeSIS model is given by
\begin{equation*}
\lambda = -r. 
\end{equation*}
\end{proposition}

\begin{proof}
Since $\xi(t) \neq \alpha$ if and only if at least one node is infected at time
$t$, we have the inequality
\begin{equation*}
E[z_i(t)] \leq  P(\xi(t) \neq \alpha)\leq
\sum_{i=1}^n E[z_i(t)]
\end{equation*}
for all $i\in [n]$ and~$t\geq 0$. Therefore, the exponential decay rate of the GeNeSIS model is determined by that of the function~$P(\xi(t) \neq \alpha)$ of $t$. By Corollary~\ref{cor:xi}, a basic argument on Markov processes (see \cite{VanMieghem2009a} for the case of exponential transmission and recover times) shows that \emph{i}) if $\lambda <-r$, there exists a constant $C>0$ such that $P(\xi(t) \neq \alpha) \leq Ce^{-\lambda t}$ for all $t$ and any initial state $\xi(0)$, and \emph{ii}) if $\lambda>-r$, there exists an initial state of $\xi$ such that the function~$P(\xi(t) \neq \alpha)$ cannot be bounded from above by the exponential function~$C e^{-\lambda t}$ for any value of~$C$. These observations immediately prove the claim of the proposition.
\end{proof}

{Proposition~\ref{prop:exact} yields the exact value of the decay rate, since the proposition uses the transition rate matrix of the whole Markov process~$\xi$ that exactly describes the GeNeSIS model. However, Proposition~\ref{prop:exact} is not easily applicable in practice because the dimension~$N_\xi$ of the matrix~$Q$ grows exponentially as   
\begin{equation}\label{eq:Nx->}
N_\xi \geq \prod_{i=1}^n ( p^{\abs{\mathcal N_i}} q) = p^{2m} q^n, 
\end{equation}
where $m$ denotes the number of the edges in the network. The following theorem overcomes this computational difficulty by providing a lower bound on the growth rate in terms of the eigenvalues of a matrix whose size grows \emph{linearly} with respect to the parameters in the GeNeSIS model.}

\begin{theorem}\label{thm:main:stability}
Define the $(npq)\times (npq)$ matrix
\begin{equation}\label{eq:matcalA}
\mathcal A = (\phi b^\top) \otimes  A\otimes  (\psi \onev_q^\top) + I_{np}\otimes R^\top + (T^\top + \phi b^\top) \otimes I_{nq} ,
\end{equation}
where $A$ is the adjacency matrix of the graph~$\mathcal G$ and the vector~$b$
is defined in \eqref{eq:inf-gen}. Then, 
\begin{equation}\label{eq:mainIneq}
\lambda \geq {-\eta}(\mathcal A),
\end{equation}
where $\eta(\mathcal A)$ is the spectral abscissa of $\mathcal A$.
\end{theorem}

Before providing a proof of Theorem~\ref{thm:main:stability}, we below present a series of corollaries of the theorem. The proofs of the corollaries are straightforward and, therefore, omitted. The first corollary gives a bound on the decay rate of the GeNeSIS model with exponential transmission times and phase-type recovery times.

\begin{corollary}\label{cor:expTrans}
Assume that the transmission times follow an exponential distribution with mean $1/\beta$. Define the $(nq)\times (nq)$ matrix~
\begin{equation*}
\mathcal A_\beta = \beta A\otimes (\psi \onev^\top_q) + I_n \otimes R^\top
\end{equation*}
Then, the decay rate satisfies $\lambda \geq {-\eta}(\mathcal A_\beta)$. 
\end{corollary}

The next corollary deals with the dual case with phase-type transmission times and exponential recovery times.

\begin{corollary}\label{cor:expRec}
Assume that the recovery times follow an exponential distribution with mean $1/\delta$. Define the $(np)\times (np)$ matrix
\begin{equation*}
\mathcal A_\delta = (\phi b^\top) \otimes A + (T^\top+\phi b^\top)\otimes I_n -\delta I_{np}.
\end{equation*}
Then, the decay rate satisfies $\lambda \geq {-\eta}(\mathcal A_\delta)$. 
\end{corollary}

As the special case of Theorem~\ref{thm:main:stability}, as well as  Corollaries~\ref{cor:expTrans} and~\ref{cor:expRec}, we can prove the following bound on the decay rate of the standard SIS model:

\begin{corollary}[{\cite{Ganesh2005,Preciado2014}}] \label{cor:expTransRec}
Assume that the transmission and recovery times follow exponential distributions with means $1/\beta$ and $1/\delta$, respectively. Define the $n\times n$ matrix
\begin{equation*}
\mathcal A_{\beta, \delta} = \beta A - \delta I_n.
\end{equation*}
Then, the decay rate satisfies $\lambda \geq {-\eta}(\mathcal A_{\beta, \delta})$. 
\end{corollary}

The above corollaries suggest an intuitive understanding of the terms in the matrix~$\mathcal A$ defined in \eqref{eq:matcalA}. Comparing the expressions of the matrices~$\mathcal A_\delta$ and~$\mathcal A_{\beta, \delta}$, we see that the role of the exponential transmission rate~$\beta$ is played by the $p\times p$ matrix~$\phi b^\top$ in the case of phase-type transmission times. On the other hand, the second term of the matrix~$\mathcal A_\delta$, namely, $(T^\top+\phi b^\top)\otimes I_n$, can be understood as a correction term that arises independently of the topology of the network. Similarly, comparing the matrices~$\mathcal A_\beta$ and~$\mathcal A_{\beta, \delta}$, we see that the second term~$I_n\otimes R^\top$ of the matrix $\mathcal A_\beta$ represents the effect of phase-type recoveries. On the other hand, we can understand the matrix~$\psi\onev^\top_q$ as a correction term resulting from using phase-type recovery times.

We now give the proof of Theorem~\ref{thm:main:stability}.    

\begin{proof}[Proof of Theorem~\ref{thm:main:stability}]
Combining equations~\eqref{eq:SDE:transmission} and~\eqref{eq:SDE:recovery}, we
obtain the following $\mathbb{R}^{p+q}$\nobreakdash-valued stochastic
differential equation
\begin{equation*}
\renewcommand*{\arraystretch}{.75}
\begin{aligned}
d\begin{bmatrix}
x^{ji}\\y^i
\end{bmatrix} 
&= 
\sum_{m=1}^p \sum_{m'=1}^p
\begin{bmatrix}
(E_{m'm} - E_{mm}) x^{ji}\\0_{nq}
\end{bmatrix}dN_{T_{mm'}}^{ji}
\notag\\&\myindent
+ \sum_{m=1}^p
\begin{bmatrix}
(e_\phi^{ji} e_m^\top - E_{mm}) x^{ji}\\0_{nq}
\end{bmatrix}dN_{b_m}^{ji} 
\\
&\myindent+ 
\sum_{\ell=1}^q
\sum_{\ell'=1}^q
\begin{bmatrix}
0_{np} \\ (F_{\ell'\ell} - F_{\ell\ell})y^i
\end{bmatrix}dN_{R_{\ell \ell'}}^i
\\
&\myindent+ 
\sum_{\ell=1}^q 
\begin{bmatrix}
-x^{ji} y_\ell^i
\\ 
-y^i y_\ell^i
\end{bmatrix}dN_{d_\ell}^i
\\&\myindent+ 
\sum_{k=1}^n\sum_{m=1}^p
a_{ik}\begin{bmatrix}
e_\phi^{ji}(1-\onev^\top_qy^i)x_m^{ik}
\\
f_\psi^i (1-\onev^\top_qy^i) x_m^{ik}
\end{bmatrix}
dN_{b_m}^{ik}.
\end{aligned}
\end{equation*}
Then, we apply It\^o's formula in Lemma~\ref{lem:Ito} using the function 
\begin{equation*}
g\left(
\begin{bmatrix}
x^{ji}\\y^i
\end{bmatrix}
\right) = w^{ji} = x^{ji} \otimes y^i
\end{equation*}
to obtain the following $\mathbb{R}^{p q}$\nobreakdash-valued stochastic differential equation (after tedious, but simple, algebraic manipulations)
\begin{align}
dw^{ji} 
&=
\sum_{m=1}^p \sum_{m'=1}^p
\bigl((E_{m'm} - E_{mm}) \otimes I_q\bigr) w^{ji}\,dN_{T_{mm'}}^{ji}
\notag
\\
&\myindent
+ 
\sum_{m=1}^p
\bigl((e_\phi^{ji} e_m^\top - E_{mm}) \otimes I_q\bigr) w^{ji}
\,dN_{b_m}^{ji}  
\notag
\\
&\myindent+ 
\sum_{\ell=1}^q \sum_{\ell'=1}^q
\bigl(I_p \otimes (F_{\ell'\ell} - F_{\ell\ell})\bigr) w^{ji}
\, dN_{R_{\ell \ell'}}^i \notag 
\\
&\myindent+ 
\sum_{\ell=1}^q 
(-y_\ell^i w^{ji})\,dN_{d_\ell}^i 
\notag\\&\myindent
+\sum_{k=1}^n\sum_{m=1}^p
a_{ik}(e_\phi^{ji} \otimes f_\psi^i) (1-\onev^\top_{pq} w^{ji}) (e_m^\top \otimes \onev_q^\top) w^{ik}\,dN_{b_m}^{ik}.
\label{eq:dwji}
\end{align}
For brevity, we omit the details of this derivation. Define 
\begin{equation*}
\omega_{ji}(t) = E[w^{ji}(t)]
\end{equation*}
for $t\geq 0$. Then, using Lemma~\ref{lem:expectation}, from \eqref{eq:dwji} we can derive the $\mathbb{R}^{p q}$-valued differential equation
\begin{align}
\frac{d\omega_{ji}}{dt}
&=
\Bigl[ (T^\top + B) \otimes I_q
+ 
(\phi b^\top - B)\otimes I_q  
\notag
\\ 
&\myindent
+\bigl(I_p\otimes (R^\top + D)\bigr)\Bigr]\omega_{ji} -(I_p\otimes D)\omega_{ji} 
\notag 
\\
&\myindent 
+(\phi \otimes \psi) (e_m^\top \otimes \onev_q^\top) \sum_{m=1}^p b_m\sum_{k=1}^n
(a_{ik}\omega_{ik})  - \epsilon_{ji},
\label{eq:domegaji/dt}
\end{align}
where $B$ and~$D$ are the $n\times n$ diagonal matrices having the diagonals $b_1,\dotsc,b_n$ and~$d_1,\dotsc,d_n$, respectively, and the $\mathbb{R}^{pq}$\nobreakdash-valued function $\epsilon_{ji}$ is defined by
\begin{equation}\label{eq:epsilon}
\epsilon_{ji}(t) = \sum_{k=1}^n\sum_{m=1}^p b_m a_{ik}(\phi \otimes \psi)
E[\onev^\top_{pq} w^{ji}(t)(e_m^\top \otimes \onev_q^\top) w^{ik}(t)]
\end{equation}
for every $t\geq 0$.

\begin{figure*}[tb]
\centering 
\includegraphics[width=.975\linewidth]{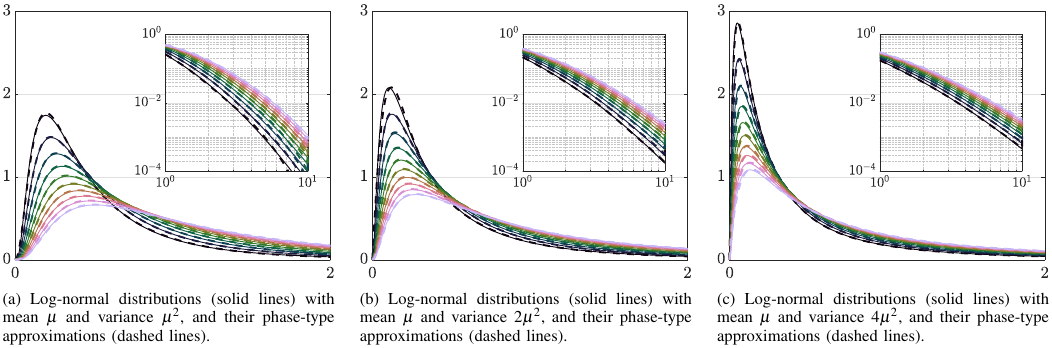}
\caption{Approximations by phase-type distributions. Solid: Probability density functions of the log-normal
distributions for $\mu$ in the range $[0.5:0.1:1.5]$. Dashed:
Probability density functions of phase-type distributions. The
darker the colors of the plots, the smaller the values of~$\mu$.}
\label{fig:fitting}
\end{figure*}

Now, for every $i\in [n]$, we take an arbitrary $j_i \in \mathcal N_i$. We then
define the function $\omega_i$ by $\omega_i= \omega_{j_ii}$, as well as the 
$\mathbb{R}^{n p q}$-valued function
\begin{equation*}
\omega = \col(\omega_1, \dotsc, \omega_n).
\end{equation*}
Notice that, from \eqref{eq:dwji}, {for  each  $i  \in [n]$,} all the stochastic processes in the set~$\{w^{ji}\}_{j\in\mathcal N_i}$ follow the same stochastic differential equation and, therefore, present the same probability distribution. This implies that $\omega_i = \omega_{ji}$ for every $j\in \mathcal N_i$. Therefore, we can rewrite the last summation appearing in \eqref{eq:domegaji/dt} as $\sum_{k=1}^n a_{ik}\omega_{ik} = \sum_{k=1}^n a_{ik}\omega_{k} = (A_i \otimes I_{pq})\omega$, where $A_i$ denotes the $i$-th row of the adjacency matrix~$A$. Then, from (11), it follows that
\begin{equation*}
\begin{multlined}[.8\linewidth]
\frac{d\omega_i}{dt} = \left( T^\top  \otimes I_q + I_p\otimes R^\top + (\phi
b^\top)\otimes I_q\right)\omega_i 
\\
+\left(A_i \otimes (\phi b^\top) \otimes
(\psi\onev^\top_q)\right)\omega-\epsilon_{j_ii}.
\end{multlined}
\end{equation*}
Defining 
\begin{equation*}
\epsilon =
\col(\epsilon_{j_11}, \dotsc, \epsilon_{j_nn}),
\end{equation*}
we obtain the differential
equation 
\begin{equation*}
\frac{d\omega}{dt}=\mathcal A' \omega -\epsilon,
\end{equation*}
for the matrix $\mathcal A' = I_n \otimes ( T^\top\otimes I_q + I_p \otimes R^\top + (\phi b^\top)\otimes I_q ) + A\otimes (\phi b^\top)\otimes (\psi \onev^\top_q)$.

Since $\mathcal A'$ is Metzler, we have that $e^{\mathcal A't} \geq 0$ for every $t\geq 0$. Also, since both $x^{ji}(t)$ and~$y^i(t)$ are nonnegative for all $i\in[n]$, $j\in\mathcal N_i$, and~$t\geq 0$, we have that $\epsilon(t) \geq 0$ for every $t\geq 0$. Therefore, it follows that
\begin{equation*}
\omega(t) = e^{\mathcal A' t}\omega(0) - \int_0^t e^{\mathcal A'(t-\tau)}\epsilon(\tau)\,d\tau \leq e^{\mathcal A' t}\omega(0).
\end{equation*}
This inequality implies that $\omega(t)$ converges to zero as $t\to\infty$ with a decay rate at least $-\eta(\mathcal A')$ since $\omega(t)\geq 0$ for all $t\geq 0$. On the other hand, for each $i\in [n]$ and $j\in \mathcal N_i$, we have
\begin{equation*}
\onev^\top_{pq} w^{ji}(t) 
= 
(\onev^\top_p x^{ji}(t)) (\onev^\top_q y^i(t)) 
= 
\onev^\top_q y^i(t) = z_i(t)
\end{equation*}
from Lemma~\ref{lem:basicProperties}. Therefore, we have $E[z_i(t)] = \onev^\top_{pq}\omega_{ji}(t)$, which shows the exponentially fast convergence of $E[z_i(t)]$ towards zero with a decay rate at least $-\eta(\mathcal A')$. This completes the proof of the inequality~\eqref{eq:mainIneq} since $\mathcal A'$ and $\mathcal A$ are similar.
\end{proof}

\begin{figure*}[tb]
\centering
\includegraphics[width=.975\linewidth]{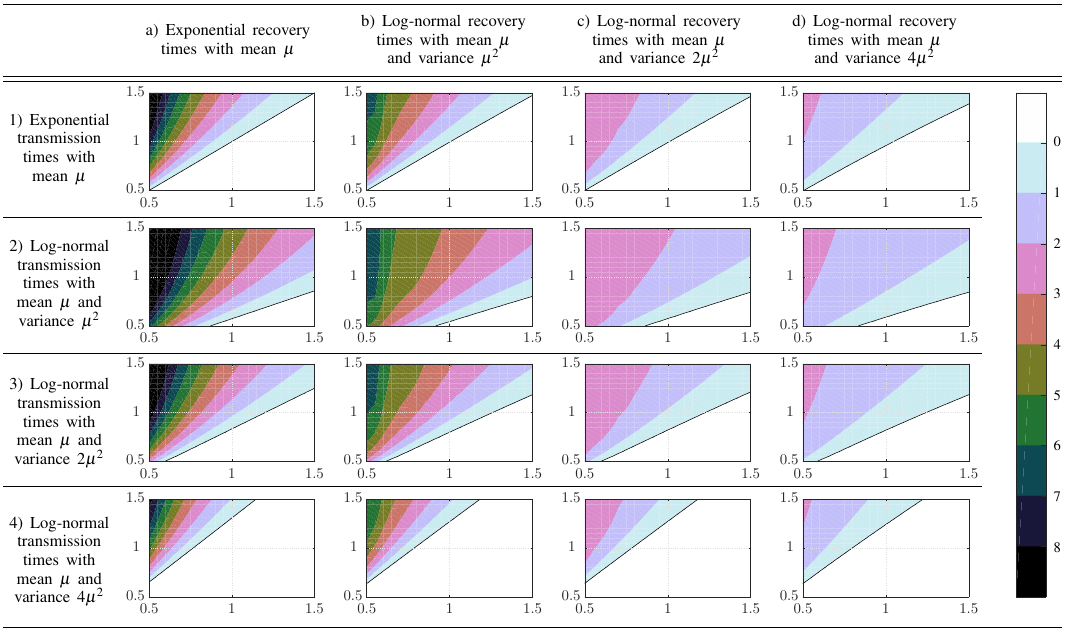}
\caption{Lower bound $-\eta(\mathcal A)$ on the exponential decay rate of
generalized SIS model. Horizontal axes: $\mu$. Vertical axes: $\mu$ divided by the spectral radius of the graph. {The white region corresponds to unstable epidemics; the darker the region, the faster the epidemics converges toward the infection-free equilibrium.}}
\label{fig:table}
\end{figure*}

\begin{remark}
Unlike the necessary and sufficient condition in Proposition~\ref{prop:exact}, the condition in Theorem~\ref{thm:main:stability} is only sufficient. This conservatism arises from ignoring the higher-order term~$\epsilon_{ji}$ in \eqref{eq:epsilon}. The inclusion of these higher-order terms into the analysis (see, e.g., \cite{Ruhi2016a,Ogura2017}) would allow us to reduce the conservatism, at the cost of increasing the dimension of the matrix~$\mathcal A$ .
\end{remark}

\section{Numerical Simulations} \label{sec:sim}

In this section, we illustrate the effectiveness of our results with numerical simulations in a real social network {having $n=247$ nodes and $940$ edges}. {We focus on log-normal transmission and recovery times, which are observed in empirical studies, including information spread on online social networks~\cite{Mieghem2011a,Doerr2013} and human epidemiology~\cite{Limpert2001,NISHIURA2007}. In our simulations, we illustrate the effect of using exponential distributions to model transmission and recovery times that, in reality, follow log-normal distributions. In particular, we analyze how using standard Markovian models with exponential rates induce errors in the computation of the decay rate. Furthermore, we are also interested in how the variances of log-normal distributions, which cannot be incorporated into the standard Markovian model, affect the decay rates. For this purpose,} we use the following four distributions to model transmission and recovery times in the GeNeSIS model: (\emph{i}) the exponential distribution with mean~$\mu$ (and, hence, variance~$\mu^2$); (\emph{ii}) the log-normal distribution with mean~$\mu$ and variance~$\mu^2$; (\emph{iii}) the log-normal distribution with mean~$\mu$ and variance~$2\mu^2$; and (\emph{iv}) the log-normal distribution with mean~$\mu$ and variance~$4\mu^2$.

In order to analyze the {decay rate} of the GeNeSIS model whose transmission and recovery times follow one of these four distributions, we first approximate the three log-normal distributions in (\emph{ii})--(\emph{iv}) using phase-type distributions (as described in Section~\ref{sec:SISmodel}) having $p=10$ phases. Fig.~\ref{fig:fitting} shows the probability density functions of the (exact) log-normal distributions, as well as the fitted phase-type distributions, when the value of the parameter $\mu$ is in the range $[0.5:0.1:1.5]$. {We notice that, since the inequality~\eqref{eq:Nx->} shows that the size of the exact state transition matrix satisfies $N_\xi > 10^{2127}$, it is not practical to use Proposition~\ref{prop:exact} to compute the exact decay rate.}

Using the proposed phase-type distributions, we apply Theorem~\ref{thm:main:stability} to analyze the {decay rate} of the GeNeSIS model when the transmission/recovery times follow one of the four distributions described above. We compute the decay rate~$-\eta(\mathcal A)$ for each one of the possible sixteen combinations of transmission/recovery distributions when $\mu$ is in the range $[0.5:0.05:1.5]$. In the subfigures of Fig.~\ref{fig:table}, we include contour plots of the decay rates for these sixteen cases. {In the figure, the values of the decay rates are indicated by different colors. The darker the color, the faster the epidemics converges to the infection-free equilibrium.} 

We remark that, even though the four distributions used in our simulations have the same mean, the resulting GeNeSIS models exhibit different decay rates. {We can furthermore observe that the variances of the distributions used to model transmission and recovery times can dramatically  affect the decay rate of the GeNeSIS model, as was previously indicated by numerical simulations~\cite{VanMieghem2013} and mean-field approximations~\cite{Cator2013a}.} For a fixed recovery distribution (i.e., for a fixed column in the table in Fig.~\ref{fig:table}), the colored region (indicated by black solid lines) shrinks as we increase the variance of the log-normal distribution modeling the transmission times (see rows~2 to~4 in Fig.~\ref{fig:table}). From this observation, we {see} that the heavier the tail of the transmission distribution, the {slower} the extinction of the spreading process{, as was numerically confirmed in~\cite{VanMieghem2013}.} We also observe that for a fixed transmission distribution (i.e., for a fixed row in the table), the colored region remains almost unaltered as we increase the variance of the distribution modeling the recovery time{, confirming the validity of the mean-field analysis in~\cite{Cator2013a} for the case of exponential transmission times}. We can furthermore observe that the exponential decay rate~$-\eta(\mathcal A)$ increases more abruptly inside the region as we decrease this variance as indicated by steeper gradients. {The above observations illustrate the effectiveness of the proposed framework for analyzing the decay rate of epidemics in networks with non-Poissonian transmission and/or recovery distributions.}

\section{Conclusion} \label{sec:co}

In this paper, we have analyzed the dynamics of an {SIS} model of spreading over arbitrary networks with phase-type transmission and recovery times. Since phase-type distributions form a dense family in the space of positive-valued distributions, our results allow to theoretically analyze arbitrary transmission and recovery times within an arbitrary accuracy. In this context, we have derived conditions for this generalized spreading model to converge towards the infection-free equilibrium (i.e., to eradicate the spread) with a given exponential decay rate. We have specifically provided a \emph{transient} analysis of the \emph{stochastic} spreading dynamics over arbitrary networks without relying on mean-field approximations. Our results illustrate that the particular shape of the transmission/recovery distribution heavily influences the exponential decay rate of the convergence towards the infection-free equilibrium. {Through numerical simulations, we have specifically observed that our results allow to theoretically confirm some observations previously obtained by numerical simulations and mean-field approximations.}

{A possible direction for future research is considering time-varying (temporal) networks~\cite{Masuda2016b,Holme2015b} in which transmission and recovery events follow non-Poissonian distributions. Another interesting research direction is developing an optimal resource allocation strategy for non-Markovian epidemic spreading processes. Although we can find in the literature various research efforts~\cite{Wan2008IET,Preciado2014,Han2015a,Ogura2015a,AbadTorres2016} for designing containment methodologies for networked epidemic spreading processes, many of them are based on decay rates that are derived under the Markovian assumption on transmission and recovery events. In this direction, it is of practical interest to investigate how the non-Markovianity of spreading dynamics alters the optimal allocation strategies that have been investigated in the literature.}

\begin{IEEEbiography}
{Masaki Ogura} received the B.Sc.~degree in Engineering and M.Sc.~degree in Informatics from Kyoto University, and the Ph.D. degree in Mathematics from Texas Tech University. He was a Postdoctoral Researcher with the Department of Electrical and Systems Engineering at the University of Pennsylvania. He is currently an Assistant Professor with the Division of Information Science, Nara Institute of Science and Technology, Japan. His research interests include network science, dynamical systems, and convex optimizations with applications to epidemic control, consensus formation, and product development processes.
\end{IEEEbiography}

\begin{IEEEbiography}
{Victor M.~Preciado} received the PhD degree in electrical engineering and computer science from the Massachusetts Institute of Technology, in 2008. He is currently an associate professor of electrical and systems engineering with the University of Pennsylvania, where he is a member of the Networked and Social Systems Engineering (NETS) program, the Warren Center for Network and Data Sciences, and the Applied Math and Computational Science (AMCS) program. He is a recipient of the 2017 National Science Foundation Faculty Early Career Development (CAREER) Award and the 2018 Outstanding Paper Award from the IEEE Control Systems Magazine. His main research interests lie at the intersection of big data and network science; in particular, in using innovative mathematical and computational approaches to capture the essence of complex, high-dimensional dynamical systems. Relevant applications of this line of research can be found in the context of socio-technical networks, brain dynamical networks, healthcare operations, biological systems, and critical technological infrastructure.
\end{IEEEbiography}

\end{document}